\numberwithin{equation}{section} 
\setlist{nolistsep,leftmargin=*}
\crefname{figure}{figure}{figures}
\crefname{section}{section}{sections}
\crefname{subsection}{subsection}{subsections}
\crefname{table}{table}{tables}
\let\LNCSvec\vec   
\let\vec\relax     
\let\vec\LNCSvec   
\newacronym{acr:rvc}{RVC}{request vehicle combination}
\newacronym{acr:hva}{HVA}{Hybrid Vehicle Adjusted}
\newacronym{acr:hv}{HV}{Hybrid Vehicle}
\newacronym{acr:mdp}{MDP}{Markov Decision Process}
\newacronym{acr:rl}{RL}{Reinforcement Learning}
\newacronym{acr:marl}{MARL}{Multi-Agent Reinforcement Learning}
\newacronym{acr:amod}{AMoD}{Autonomous Mobility-on-Demand}
\newacronym{acr:drl}{DRL}{Deep Reinforcement Learning}
\newacronym{acr:mod}{MoD}{Mobility-on-Demand}
\newacronym{acr:sac}{SAC}{Soft Actor-Critic}
\newacronym{acr:sacd}{SACD}{Soft Actor-Critic Discrete}
\newacronym{acr:nn}{NN}{Neural Network}
\newacronym{acr:mpc}{MPC}{Model Predictive Control}
\newacronym{acr:relu}{ReLU}{rectified linear unit}
\begin{document}
\title{Multi-Agent Soft Actor-Critic with Coordinated Loss for Autonomous Mobility-on-Demand Fleet Control}
\titlerunning{Multi-Agent SAC with Coordinated Loss for AMoD Fleet Control}
%
\author{Zeno Woywood\inst{1}* \and
Jasper I. Wiltfang\inst{1}* \and
Julius Luy\inst{1} \and Tobias Enders\inst{1} \and Maximilian Schiffer \inst{1,2}}
\authorrunning{Z. Woywood et al.}
%
\institute{School of Management, Technical University of Munich, \\ Arcisstraße 21, 80333 Munich, Germany, \\ 
\email{\{zeno.woywood, jasper.wiltfang, julius.luy, tobias.enders\}@tum.de} \and Munich Data Science Institute, Technical University of Munich, \\ Walther-von-Dyck-Straße 10, 85748 Garching, Germany \\
\email{schiffer@tum.de}}
\maketitle              
\begin{abstract}
We study a sequential decision-making problem for a profit-maximizing operator of an autonomous mobility-on-demand system. Optimizing a central operator's vehicle-to-request dispatching policy requires efficient and effective fleet control strategies. To this end, we employ a multi-agent Soft Actor-Critic algorithm combined with weighted bipartite matching. We propose a novel vehicle-based algorithm architecture and adapt the critic's loss function to appropriately consider coordinated actions. Furthermore, we extend our algorithm to incorporate rebalancing capabilities. Through numerical experiments, we show that our approach outperforms state-of-the-art benchmarks by up to 12.9\% for dispatching and up to 38.9\% with integrated rebalancing.

\keywords{hybrid learning and optimization \and multi-agent learning \and deep reinforcement learning \and coordinated loss \and autonomous mobility on demand}

\def\thefootnote{*}\footnotetext{Both authors contributed equally to this work.}

\end{abstract}

\section{Introduction}\label{sec:introduction}
\gls{acr:amod} systems promise to transform urban mobility, following \gls{acr:mod} providers like Uber and DiDi. They enable \gls{acr:mod} services with fast response times and point-to-point trips at lower costs, shifting operator priorities: \gls{acr:mod} depends on driver wages, favoring revenue maximization, whereas \gls{acr:amod} focuses on profit optimization by minimizing distance-related costs. Additionally, \gls{acr:amod} enhances central control by leveraging historical and contextual trip data. In this context, operators face two key decisions: accepting and dispatching requests, and rebalancing vehicles to match demand. This results in a stochastic control problem, which we study through the lense of \gls{acr:drl}. Specifically, we introduce a novel parallel algorithm combining multi-agent \gls{acr:drl} with combinatorial optimization for scalability and efficiency.

\textbf{Related Work:}
Control algorithms for (A)\gls{acr:mod} systems range from rule-based heuristics \cite[see, e.g.,][]{Hyland_2018} to (learning-enriched) optimization approaches \cite[see, e.g.,][]{Alonso_Mora2017, Jungel_2023_AMODFleetControl_OnlineOptimization}, and \gls{acr:mpc} \cite[see, e.g.,][]{Iglesias_2018}, but are only loosely related to this work as we focus on \gls{acr:rl}. \gls{acr:rl} adapts online to stochastic demand, whereas \gls{acr:mpc} relies on fixed predictive horizons. \gls{acr:rl} approaches split into single-agent approaches for dispatching \cite{Qin_2020, Liu2022, Zheng_2022} or rebalancing \cite{Fluri, Gammelli_2021_AMODRebalancing_DRLGraphs, Jiao_2021}, and multi-agent approaches for dispatching \cite{Li_2019,Tang_2019, Zhou_2019, Enders_2023_AMODDispatching_HybridMADRL} or rebalancing \cite{Holler_2019_MODRebalancing_MADRL, Li_2022_RebalancingWOCost_MADRL, Liang_2022_CMODRebalancing_MADRL}. Here, multi-agent approaches promise to increase scalability to larger action spaces and remain the focus of our work. Accordingly, we limit our discussion to recent works in this field to ensure conciseness.

For dispatching, \cite{Li_2019} proposed a mean field actor-critic algorithm, while \cite{Tang_2019} used a value-based approach, \cite{Zhou_2019} used Q-Learning with KL-divergence optimization, and \cite{Enders_2023_AMODDispatching_HybridMADRL} employed a \gls{acr:sac} with bipartite matching. 
For combined dispatching and rebalancing, \cite{Li_2022_RebalancingWOCost_MADRL} used a graph neural network with a centralized critic for training, and a decentralized actor-critic for execution, while
\cite{Holler_2019_MODRebalancing_MADRL} used Deep Q-Networks and Proximal Policy Optimization with attention mechanisms to process global states, and 
\cite{Liang_2022_CMODRebalancing_MADRL} proposed a two-stage algorithm for dispatching and rebalancing using centralized programming. All methods treated dispatching and rebalancing as sequential decisions.
Our work contributes to this field by integrating dispatching and rebalancing in a multi-agent \gls{acr:drl} setting. 

\textbf{Contribution:}
To close the research gap outlined above, we present a novel multi-agent actor-critic algorithm for \gls{acr:amod} fleet dispatching and integrated rebalancing under a profit maximization objective. Specifically, we propose a scalable parallel \gls{acr:sac} architecture, wherein each vehicle functions as an agent providing per-request weights, to address the operators' need for a comprehensive and feasible solution to its control problem. We obtain coordinated actions for the fleet by solving a bipartite matching problem, wherein vehicles and requests represent vertices with the agents' per-request weights between them. To account for the differences between per-agent and coordinated actions induced by the combination of \gls{acr:drl} and a coordination layer, we show how to adapt the critic loss function. This ``coordinated loss'' leads to a more precise estimate of the value of future states which is coordinated across the fleet. We show that our algorithm outperforms dispatching-only algorithms by up to 12.9\% on real-world datasets while maintaining stability and scalability. We further extend our dispatching algorithm to include the concurrent option of rebalancing by providing artificial rebalancing requests to the same model. The extended algorithm achieves superior performance, up to 38.9\%, compared to a rebalancing heuristic. To foster future research and ensure reproducibility, our code is publicly available on GitHub: \url{https://github.com/tumBAIS/HybridMADRL-AMoD-Relocation}.

\section{Control Problem}\label{sec:control_problem}
We consider a stochastic multi-stage control problem introduced in \cite{Enders_2023_AMODDispatching_HybridMADRL}, in which a profit-maximizing central operator manages a (possibly autonomous) fleet of vehicles to serve stochastic customer requests that enter 
\begin{figure}[!ht]
  \centering
  \includegraphics[width=\textwidth]{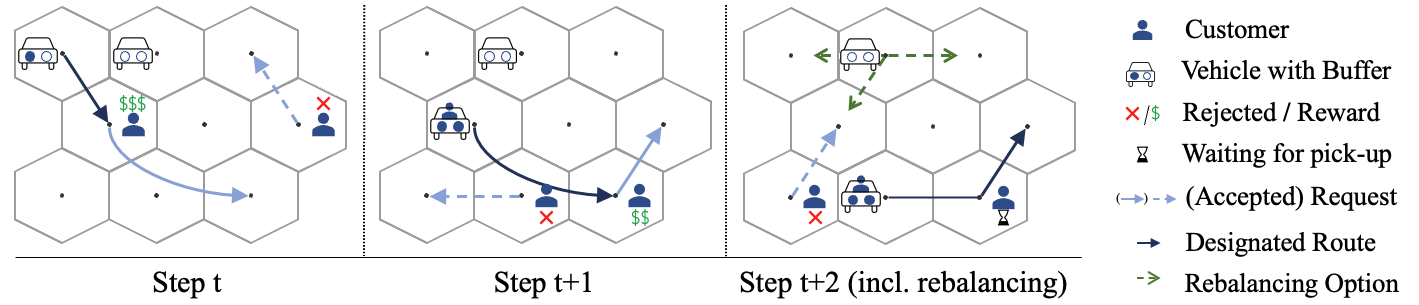}
  \caption{Schematic visualization of vehicle dispatching (and rebalancing) over time.}
  \label{fig:control_problem}
\end{figure}
the system over time, see \Cref{fig:control_problem}.
The operator accepts or rejects requests and dispatches vehicles accordingly. These decisions must be made in real time such that shifting rejected requests into the future is not possible. In an extension of this basic dispatching problem, we allow for a rebalancing decision, where the operator can redistribute idling vehicles within the operating area to proactively match vehicle supply and anticipated customer demand. We formalize the underlying control problem as a \gls{acr:mdp} in the following.

\textbf{Preliminaries:} 
We consider discrete time steps $t$ within a time horizon $\mathcal{T} = \{0, 1, \dots, T\}$, and represent the grid-based operating area as a graph $\mathcal{G} = (\mathcal{V}, \mathcal{E})$ with edge weights $e^d \in \mathbb{R}_{>0}$ denoting the distance and $e^t \in \mathbb{N}$ denoting the number of time steps required to traverse an edge $e \in \mathcal{E} \subseteq \{\{v,u\} \,|\, u,v \in \mathcal{V} \, \land \, v \neq u \}$. Vertices $v \in \mathcal{V}$ represent the centers of zones in the operating area. The neighbors of a vertex $v$ are given by $\mathcal{N}_{\mathcal{G}}(v) = \{u\;|\;(u,v) \in \mathcal{E}\}$.
If the operator accepts a request, the customer must be picked up within a given maximum waiting time $\omega^{\mathrm{max}} \in \mathbb{N}_0$.
In each time step, a variable number of requests $F_t$ appear in the system, and the operator needs to take simultaneous decisions over a batch of requests with a fixed-size fleet of autonomous vehicles $K$. 

\textbf{States:}
We denote the system state at time step $t \in \mathcal{T}$ by \linebreak $\mathbf{s}_t = (t, (\mathbf{k}_{j,t})_{j \in \{1, \dots, K\}}, (\mathbf{f}_{i,t})_{i \in \{1, \dots, F_t\}})$, with $K$ representing the number of vehicles $\mathbf{k}_{j,t}$, $j \in \{1, \dots, K\}$ and $F_t$ being the number of new fleet requests $\mathbf{f}_{i,t}$, $i \in \{1, \dots, F_t\}$.
We describe vehicles $\mathbf{k} = (p, \tau, \mathbf{f}^1, \mathbf{f}^2)$, with a position $p \in \mathcal{V}$, a number of time steps $\tau \in \mathbb{N}_0$ left to reach this position and a vehicle-specific request buffer $\mathbf{f}^1, \mathbf{f}^2$, which are tuples. 
Here, $p$ can either be the current vertex for an idling vehicle or the next vertex on the vehicle's route if it is traveling. To account for realistic trip lengths and maximum waiting times, we restrict the number of requests pre-assigned to a vehicle to two. An idling vehicle is characterized by an empty request buffer, i.e., $\mathbf{f}^1 = \emptyset$ as the request buffer fills up in a first-in-first-out fashion.
We label the position of vehicle $\mathbf{k}_{j,t}$ as \( p_{j,t} \) and label the other attributes of the vehicle in the same notation.
The set $\mathcal{F}_t = \mathcal{C}_t \cup \,\mathcal{B}$ contains all available requests in time step $t$ with $F_t = |\mathcal{F}_t|$ as the count of available requests, incl. variable customer requests $\mathcal{C}_t$ and time-persistent (re-)balancing requests $\mathcal{B}$.
If the operator does not allow vehicle rebalancing, no rebalancing requests are available, i.e., $\mathcal{B} = \emptyset$ and $\mathcal{F}_t = \mathcal{C}_t$.
We describe requests $\mathbf{f} = (\omega, o, d)$ with a waiting time $\omega \in \mathbb{N}_0\,\cup\,\emptyset$ tracking the elapsed time from request placement to pick-up, an origin $o \in \mathcal{V}$, and a destination $d \in \mathcal{V}$. To enable the representation of customer and rebalancing requests based on the same tuple $\mathbf{f}$, rebalancing requests obtain an empty waiting time ($\omega = \emptyset$) and an origin corresponding to their destination~($o = d$).
To allow for rebalancing, one creates a set of rebalancing requests $\mathcal{B} = \{(\emptyset, v, v) \;|\;v\;\forall\;\mathcal{V}\}$ to consider all vertices as rebalancing destinations.
Moreover, the operator cannot rebalance multiple vehicles to the same zone in one time step, as we create only one rebalancing request for each zone. However, one can easily relax this constraint by creating multiple rebalancing requests for each zone. 
To reduce the possible action space and enforce solely reasonable rebalancing movements, we define a vehicle-specific rebalancing request set $\mathcal{C}_t \cup \mathcal{B}_{j,t}$. We consider two variants to obtain this set for individual vehicles that permit equivalent rebalancing behavior:
(i) a vehicle $j$ can rebalance to any zone except its own: $\mathcal{B}_{j,t} = \{(\emptyset, v, v) \;|\;v\;\forall\;\mathcal{V} \setminus p_{j,t}\}$,
(ii) a vehicle $j$ can rebalance to its adjacent zones: $\mathcal{B}_{j,t} = \{(\emptyset, v, v) \;|\;v\;\forall\;\mathcal{N}_{\mathcal{G}}(p_{j,t})\}$.

\textbf{Actions:} 
The operator takes one decision $a_{i,t}$ per request $\mathbf{f}_{i,t}, i \in \left\{1, \dots, F_t\right\}$. The operator can either reject it $(a_{i,t} = 0)$ or assign it to a vehicle $(a_{i,t} = j)$, which is only possible if the vehicle has a free place in its request buffer and the request is a customer request, i.e., $\mathbf{f}^2_{j,t} = \emptyset \land \mathbf{f}_{i,t} \in \mathcal{C}_t$. 
A vehicle can also rebalance $(a_{i,t} = j)$ under the condition that it idles and the request is a vehicle-specific rebalancing request, i.e.,  $\mathbf{f}^1_{j,t} = \emptyset \land \mathbf{f}_{i,t} \in \mathcal{B}_{j,t}$. 
Each vehicle can be assigned to at most one request per time step. 
The action space of the central operator reads
\begin{equation}\label{eq:action_space}
\begin{split}
    \mathcal{A}(\mathbf{s}_t) = &\biggl\{ \Big(a_{1,t}, \dots, a_{F_t,t}\Big) \; \Big| \;
    (a_{i,t} = 0) \;\lor
    (a_{i,t} = j \land \mathbf{f}^2_{j,t} = \emptyset \land \mathbf{f}_{i,t} \in \mathcal{C}_t) \;\lor\\
    &(a_{i,t} = j \land \mathbf{f}^1_{j,t} = \emptyset \land \mathbf{f}_{i,t} \in \mathcal{B}_{j,t})
    \forall \; i \in \{1, \dots, F_t\} \land j \in \{1, \dots K\},\\
    &\sum_{i=1}^{F_t} \mathds{1}(a_{i,t} = j) \leq 1, \; \forall \; j \in \{1, \dots, K\} \biggl\}
\end{split}
\end{equation}

\textbf{Transition:}
First, we describe the action-dependent transition from the pre-decision state $\mathbf{s}_t$ to the post-decision state $\mathbf{s}_{t+}$. 
A reject decision does not alter the state. If a (rebalancing) request is assigned to a vehicle, the request is appended to that vehicle's request buffer.

The transition from the post-decision state to the next system state $\mathbf{s}_{t+1}$ is independent from the action and follows system dynamics.
If a vehicle picks up a customer at its origin vertex, we reset the waiting time of that request.
If a vehicle finds itself at a vertex and is serving a request, the position of the vehicle is updated based on the route taken. Vehicles take the shortest route to their next destination.
If a vehicle drops off a customer before the next decision is made, the request buffer removes the old request and shifts the content of the second request buffer position into the first.
The waiting time for accepted but not yet served requests increases by the time step increment. Rejected requests are dropped and leave the system immediately. New requests appear according to an unknown distribution.

\textbf{Rewards:} 
The operator aims to maximize profits. As autonomous vehicles have negligible fixed costs after an initial investment, the operational costs $c \in \mathbb{R}_{\leq 0}$ and trip fares are based on the driven distance. 
Each request is billed individually, depending on whether a car has picked up the customer on time. The revenue per request $\mathrm{rev}(\mathbf{f}) \in \mathbb{R}_{> 0}$ is based on the operator's pricing model and the total revenue per time step over all vehicles depends on the post-decision state
\begin{equation*}
    \text{Rev}(\mathbf{s}_{t+}) = \sum_{j=1}^{K} \mathds{1}(\mathbf{f}^1_{j,t+} \neq \emptyset \land \tau_{j,t+} = 0 \land p_{j,t+} = o(\mathbf{f}^1_{j,t+}) \land \omega(\mathbf{f}^1_{j,t+}) \leq \omega^{\mathrm{max}} ) \; \mathrm{rev}(\mathbf{f}^1_{j,t+}).
\end{equation*}

Regardless of whether customers are on board, the vehicle incurs variable costs denoted by $c \in \mathbb{R}_{\leq 0}$ per unit of distance traveled. Based on the vehicle's route from the post-decision position $p_{j,t+}$ to the next pre-decision position $p_{j,t+1}$, the total cost per time step results to

\begin{equation*}
    \text{Cost}(\mathbf{s}_{t+}) = c \,\sum_{j=1}^{K} \mathds{1}(\mathbf{f}^1_{j,t+} \neq \emptyset \land \tau_{j,t+} = 0)\; {(p_{j,t+}, p_{j,t+1})}^{d}.
\end{equation*}

The total profit at time $t+$ is $\mathrm{Profit}(\mathbf{s}_{t+}) =\mathrm{Rev}(\mathbf{s}_{t+})-\mathrm{Cost}(\mathbf{s}_{t+})$.
The post-decision state $\mathbf{s}_{t+}$ is a function of the pre-decision state $\mathbf{s}_t$ and the taken action $\mathbf{a}_t\in\mathcal A\left(\mathbf{s}_t\right)$. Thus, we write $\mathrm{Profit}(\mathbf{s}_{t+})=\mathrm{Profit}(\mathbf{s}_t, \mathbf{a}_t)$.

The centralized operator of the \gls{acr:amod} fleet aims to find a policy $\pi(\mathbf{a}_t | \mathbf{s}_t)$ maximizing the expected total profit over $\mathcal{T}$, based on the initialized state $\mathbf{s}_0$:

$$
    \text{Profit}(\mathbf{s}_0) = \max_{\pi} \mathbb{E}_{(\mathbf{s}_t, \mathbf{a}_t) \sim \pi} \left[ \sum_{t=0}^{T-1} \text{Profit}(\mathbf{s}_t, \mathbf{a}_t) \Big| \mathbf{s}_0 \right].
$$


\section{Methodology}\label{sec:methodology}

\Cref{fig:model} gives an overview of our algorithm's architecture. 
To derive an efficient and scalable algorithm, we employ multi-agent \gls{acr:sac} to keep the action space tractable. SAC’s entropy‑regularized policy yields superior stability and sample efficiency over other \gls{acr:rl} algorithms.
In this algorithm's architecture, vehicles represent agents, which use the same actor neural network and share parameters for efficiency. All agents evaluate all available requests and assign them a weight. We mask these weights to ensure feasibility constraints, and create a weighted bipartite graph with vehicle agents and requests representing vertices accordingly. 
We use this weighted matching to obtain an optimal vehicle-to-request allocation, i.e., a coordinated action for all agents based on the set of single agent outputs. During training, we use per-agent rewards to avoid a credit assignment problem \cite{Chang_2003}. The critic learns from these rewards and provides per-agent values to the actor to update its policy. 
To train actor and critic, we use the discrete version of \gls{acr:sac}, i.e., \gls{acr:sacd} \cite{SACD}.
Our algorithm scales linearly with the number of requests one agent has to evaluate. 
The employment of the \gls{acr:sacd} and the combination with bipartite matching
\begin{figure}

\vspace{-0.5cm}  \centering\includegraphics[width=1\textwidth]{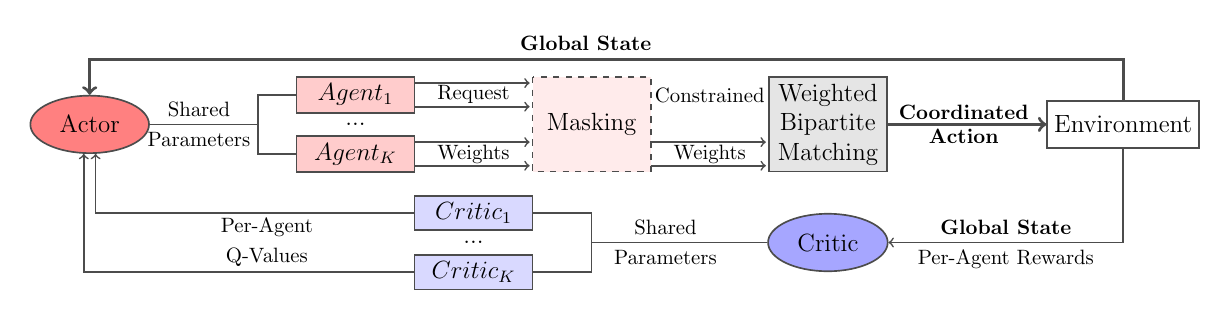}
  \caption{Overview of our algorithm's architecture, showing actor components (red), critic components (blue) and a combinatorial optimization component (gray).}
  \label{fig:model}
  \vspace{-0.25cm}
\end{figure}
are analogous to the algorithm proposed in \cite{Enders_2023_AMODDispatching_HybridMADRL}. 
Contrarily to~\cite{Enders_2023_AMODDispatching_HybridMADRL}, we use a vehicle-based parallel algorithm architecture, a masking procedure, an adapted critic loss function, and finally extend our algorithm to include rebalancing.


\subsection{Algorithm Architecture}\label{ssec:architecture}

We model each vehicle as an agent, see \Cref{fig:architecture}, as it enables the agents to build up reward trajectories corresponding to their actions, thereby fostering the learning process and accurate reward allocation. 
The agents' neural network architecture assesses each request based on the following input features: miscellaneous features $\textbf{m}_t$ contain general environment information, which is the same for all agents, e.g., the current time in the period; vehicle features $\textbf{v}_{j,t}$ are agent-specific, e.g., the current position and the request buffer; request features $\textbf{f}_{i,t}$ which we duplicate for all agents, e.g., the origin and the destination. 
We combine all features to one input-vector per vehicle and request. For the complete set of features, we refer to \Cref{sec:appendix_features}.

The actor assesses all currently available requests $\mathbf{f}_{i,t} \, , i \in \{1, \dots, F_t\}$ for one vehicle in parallel. 
To fix the input dimension of our agent’s neural network architecture, we pad the evaluated requests to have a constant size of $F_{max}$. To facilitate more simultaneous requests and speed-up computation for big instance sizes, one may limit $F_{max}$ via a broadcasting range, which gives only the closest requests to each vehicle.
We introduce an empty request $\mathbf{f}_{0,t}$ that allows a vehicle to reject unappealing requests. 
\begin{figure}[!bp]
\vspace{-0.5cm}

    \centering
    \includegraphics[width=1\textwidth]{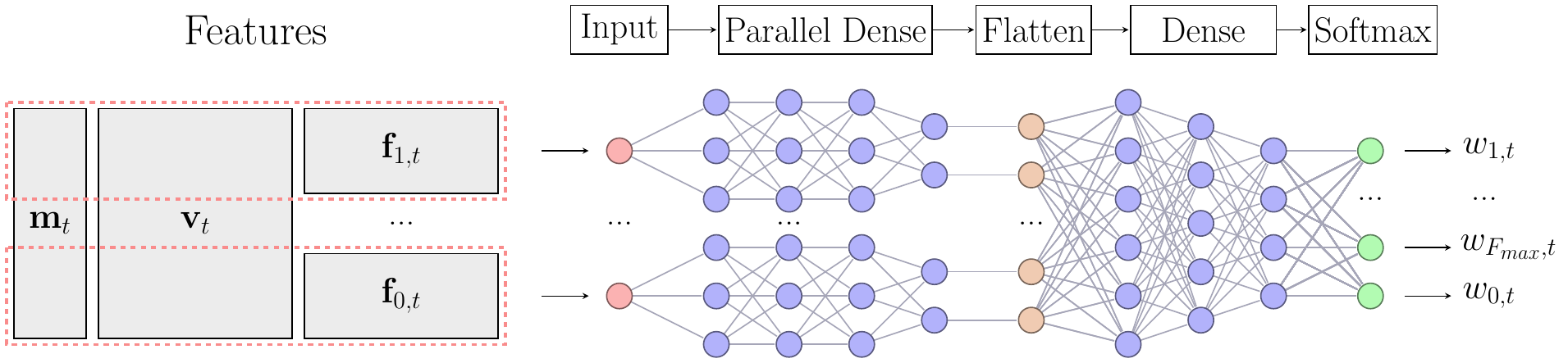}
    \caption{Vehicle-based agent architecture in which we combine input features to an input-vector (left) and use parallel neural networks (right).}

    \label{fig:architecture}
\end{figure}

We process each input vector separately through five parallel dense layers with identical parameters, forming a trainable multi-layer embedding. The two-dimensional output tensor is stacked and flattened into a one-dimensional vector, then evaluated in five dense layers. The final softmax activation outputs values $w
_{i,t} \in [0, 1]$ which we interpret as per-request weights. All agents share a common neural network architecture and parameters. The critic follows the same architecture and features as the actor but also receives the coordinated action as input. For algorithm specifications and final hyperparameters, we refer to \Cref{sec:appendix_methodology}. 

\subsection{Coordinated Critic Loss SAC-Discrete}\label{ssec:hybrid}
To derive the coordinated critic loss, we introduce the relevant notation. The per-request weights $w_{i,t}$ of each vehicle $j$ are only processed further if they exceed a threshold $\delta$, otherwise the vehicle rejects the request (cf. \Cref{app:post}). Solving this classic assignment problem with the masked actors' outputs as edge weights via bipartite matching, we get the coordinated action $\bar{a}$ of the fleet with per-vehicle action $\bar{a}_j$. We denote a transition by $(s,\bar{a},r,s')$, with environment state~$s$ and next state~$s'$ (denoted as $\mathbf{s}_{t+1}$ in the control problem). 

With these definitions, we are ready to derive the coordinated critic loss. Recall that \Gls{acr:sac} is an off-policy algorithm that incentivizes exploration by finding an optimal stochastic policy~$\pi^*$ via a maximum entropy objective. To this end, we parameterize the actor and critic network with parameters $\phi$ and $\theta$. We use two critic networks~$Q \in \{Q^1, Q^2\}$ to mitigate overestimation along with target networks with parameters~$\underaccent{\bar}{\theta}$ to ensure a stable target. We obtain the target parameters through an exponential moving average of the primary critic parameters. Then, the policy loss matching the control problem is
\begin{equation} \label{eg:actor}
    J_\pi(\phi) = \mathbb{E}_{(s,\bar{a},r,s')\sim \mathcal{D}} \biggr[ \sum_j \pi_\phi(a|s,j)^T \cdot \Bigl(\alpha \log (\pi_\phi(a|s,j))-\min_{m\in {1,2}}\Bigl\{Q_{\underaccent{\bar}{\theta}}^m(\bar{a}|s,j)\Bigl\}\Bigl)\biggr].
\end{equation}

We define~$r_j$ as the reward for agent~$j$ and sample transitions from a replay buffer $\mathcal{D}$. We define the policy's entropy as $\log(\pi_\phi(a|s,j))$, wherein $\alpha \in \mathbb{R}_{\geq0}$ controls the degree of exploration and $\gamma$ is the discount factor. Note that $\pi(a|s,j)$ represents the probability of vehicle $j$ deciding for a specific request. Hence, it corresponds to the weights $w$ in \Cref{fig:architecture} (see \Cref{ssec:architecture}), e.g., for a specific vehicle we get ${\pi(a=\mathbf{f}_{1,t}|s) = w_{1,t}}$.

The respective critic loss for our setting is
\begin{equation}
\begin{alignedat}{2}
    J_Q(\theta) &= \mathbb{E}_{(s,\bar{a},r,s')\sim \mathcal{D}} \biggr[ \sum_j \frac{1}{2} \biggl( Q_\theta (\bar{a}|s,j) - y_j \biggl)^2 \biggr]\text{, with } \\
    y_j &= r_j + \gamma \cdot \pi_\phi(a'|s',j)^T \cdot \Bigl(\min_{m\in {1,2}}\Bigl\{Q_{\underaccent{\bar}{\theta}}^m(a'|s',j)\Bigl\}-\alpha \log (\pi_\phi(a'|s',j))\Bigl)\label{eq:critic_loss}
\end{alignedat}
\end{equation}
as proposed in \cite{Enders_2023_AMODDispatching_HybridMADRL}.
As $\pi_\phi(a'|s',j)$ is the per-agent probability of taking per-agent action $a'$ in the next state, we refer to \Cref{eq:critic_loss} as a \textit{local loss}. The local loss does not accurately reflect the probability of agent $j$ executing $a'$ given $s'$. The local loss is only accurate, if $a'$ is executed (cf. \Cref{app:env_discussion}).
However, each agent executes an action $\bar{a}_j$ based on $\bar{a}$ which differs from $a'$ due to discrete assignments and feasibility constraints.  
To obtain a critic loss function that accurately reflects the coordinated action, we define a modified $\pi_\phi(a'|s',j)$, which we denote by $\bar{\pi}_\phi(a'|s',j)\in\{0,1\}$ that gets $\bar{\pi}_\phi(a'|s',j) = 1$ if $a' = \bar{a}_j$ and zero otherwise. Note that the distribution of $\bar{\pi}$ is degenerate as its support reduces to $\bar{a}_j$. We use the modified $\bar{\pi}_\phi(a'|s',j)$ to derive a new target $y_j$.

\begin{proposition}\label{prop:error}
Let us denote by $\bar{y}_j$ the per-agent target based on $\bar{\pi}_\phi(a'|s',j)$.
Then,
\begin{equation}
    \bar{y}_j = r_j + \gamma\, \min_{m\in {1,2}}\Bigl\{Q_{\underaccent{\bar}{\theta}}^m(\bar{a}_j|s',j)\Bigl\}. 
    \label{eq:adjusted_critic_target} 
\end{equation}
\end{proposition}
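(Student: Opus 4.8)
The plan is to obtain $\bar{y}_j$ by direct substitution of the degenerate policy $\bar{\pi}_\phi(a'|s',j)$ into the target expression for $y_j$ in \Cref{eq:critic_loss}, followed by an elementary simplification of the resulting inner product. First I would rewrite that target by expanding the inner product $\pi_\phi(a'|s',j)^T \cdot (\cdot)$ as an explicit sum over the possible next per-agent actions $a'$, so that after replacing $\pi_\phi$ by $\bar{\pi}_\phi$ it reads
\[
\bar{y}_j = r_j + \gamma \sum_{a'} \bar{\pi}_\phi(a'|s',j) \Bigl( \min_{m\in\{1,2\}} \{Q^m_{\underaccent{\bar}{\theta}}(a'|s',j)\} - \alpha \log \bar{\pi}_\phi(a'|s',j) \Bigr).
\]

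Next I would use that $\bar{\pi}_\phi(\cdot|s',j)$ is supported on the single action $\bar{a}_j$. Every summand with $a' \neq \bar{a}_j$ carries the factor $\bar{\pi}_\phi(a'|s',j) = 0$ and therefore vanishes, provided the accompanying entropy contribution $0 \cdot \log 0$ is interpreted as $0$ via the standard convention $\lim_{x \to 0^+} x \log x = 0$. Only the term $a' = \bar{a}_j$ survives, and there $\bar{\pi}_\phi(\bar{a}_j|s',j) = 1$, which collapses the Q-value part of the inner product to $\min_{m} \{Q^m_{\underaccent{\bar}{\theta}}(\bar{a}_j|s',j)\}$. Finally, the entropy term at this surviving action is $-\alpha \log \bar{\pi}_\phi(\bar{a}_j|s',j) = -\alpha \log 1 = 0$, so it drops as well, leaving exactly \Cref{eq:adjusted_critic_target}.

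The step requiring the most care --- and the one I would write out explicitly rather than gloss over --- is the treatment of the entropy term, since it must be shown to vanish for two distinct reasons: it is $0$ off the support by the $0 \log 0$ convention, and it is $0$ on the support because $\log 1 = 0$. Conceptually, this is precisely the statement that a deterministic (degenerate) action choice carries zero entropy, so forcing agent $j$ onto its coordinated action $\bar{a}_j$ removes the maximum-entropy bonus from the target. I expect no further obstacles, as the remainder is a one-line reduction of the inner product; the substance of the result lies in justifying why $\bar{\pi}$ is the correct object to substitute, which the preceding discussion of the mismatch between $a'$ and $\bar{a}_j$ already establishes.
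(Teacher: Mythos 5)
Your proposal is correct and follows essentially the same route as the paper's own proof: substitute the degenerate $\bar{\pi}_\phi$ into the target of \Cref{eq:critic_loss}, expand the inner product as a sum over actions, kill the off-support terms via the $0\log 0$ convention (which the paper justifies explicitly with L'H\^opital's rule, i.e.\ $\lim_{x\to 0} x\log x = 0$), and use $\log 1 = 0$ on the support. The only difference is presentational, since you cite the standard limit convention where the paper writes out the L'H\^opital computation.
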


For a proof of \Cref{prop:error}, we refer to \Cref{proof:error}.
The adjusted critic loss function now reads 
\begin{align}
    J_Q(\theta) &= \mathbb{E}_{(s,\bar{a},r,s')\sim \mathcal{D}} \biggr[ \sum_j \frac{1}{2} \biggl( Q_\theta (\bar{a}|s,j) - \bar{y}_j \biggl)^2 \biggr]. \label{eq:critic_loss_adjusted}
\end{align}
and reflects a \textit{coordinated loss}, as we compute the critic loss using the agent-specific action which we derived from the coordinated action $\bar{a}$.
Herein, we obtain $\bar{y}_j$ according to \Cref{eq:adjusted_critic_target}. 
The actor loss does not require adaptation as it learns from the critic's predictions, which, in turn, already result from the correct updates based on \Cref{eq:critic_loss_adjusted}.

\section{Experimental Design}\label{sec:exp}

We follow \cite{Enders_2023_AMODDispatching_HybridMADRL,Heiko}, and use the well-established NYC Taxi dataset \cite{nycdata} to benchmark our model against state-of-the-art algorithms. We employ a hexagonal grid for spatial discretization to balance real-world similarity with computational efficiency and differentiate two settings: a smaller 11 zone (larger 38 zone) setting with \SI{500}{\meter} (\SI{1000}{\meter}) distance between adjacent zones in lower Manhattan. For the 11 small zones, we use the dataset without modifications. In the case of the 38 large zones, we apply a downscaling factor of 20 to reduce the number of requests, ensuring a comparable fleet size assessment in both scenarios.
We consider daily one-hour intervals as episodes, spanning from 8:30 to 9:30 a.m. with a time step size of one minute. Throughout the year 2015, excluding holidays and weekends, we include a total of 245 dates, using 200 for training, 25 for validation, and 20 for testing purposes. We configure the waiting time to be 5 (10) minutes. The cost parameter determines the cost per distance driven and is set to make round trips profitable with a 20\% margin. 

We consider two experiments: pure dispatching as well as integrated dispatching and rebalancing. For the experiment with pure dispatching, we model instances with supply shortage, as operators in dispatching scenarios with infinite supply can fulfill any request immediately, making a myopic policy sufficient. Thus, we set a maximum of 12 (20) requests per time step and deploy 12 (50) vehicles. 
For the experiment integrating dispatching and rebalancing, we model a supply overhang as rebalancing is only valuable if vehicles idle. 
We integrate rebalancing requests in our algorithm's architecture with artificial requests to imitate rebalancing as a dispatching decision, as explained in \Cref{sec:control_problem}. To leverage knowledge about the environment and enhance training, we alter the training rewards of rebalancing actions. The agent receives a fare (positive or negative) depending on the vehicle population of the current and the target zone.
We refer to \Cref{app:RebalancingRequests} for further details on the rebalancing requests and their training rewards.
For the 38 large zones, we generate rebalancing requests to
neighboring zones only (see \Cref{sec:appendix_rebalancing}). Thus, we set a maximum of 6 (10) requests per time step and deploy 24 (120) vehicles. We add a constraint that requires vehicles to be in the zone or en route to the origin zone of the request to foster rebalancing further. Moreover, we study the sensitivity w.r.t. the number of vehicles by considering additional instances with $\pm\,50\%$ vehicles. Finally, we create an additional dataset by shifting the originally rather homogeneous and well-balanced NYC Taxi dataset towards an imbalanced request distribution. The modified dataset has one (two) distinct clusters of departing customers, and we refer to it as ``Clustered'' hereafter. For details on the setup, evaluated scenarios, and hyperparameters we refer to~\Cref{sec:appendix_experiment}.

\section{Results and Discussion}
In the following, we analyze the performance of our algorithm focusing on a pure dispatching setting (\Cref{sec:dispatching}), as well as a setting that includes rebalancing decisions (\Cref{sec:rebalancing}). In the first part, we denote by $V_{\mathrm{local}}$ our algorithm using a local critic loss function (cf. Equation \ref{eq:critic_loss}) and by $V_{\mathrm{coord.}}$ our algorithm using a coordinated critic loss function (cf. Equation \ref{eq:critic_loss_adjusted}). We benchmark our dispatching performance against two algorithms: an algorithm using request-vehicle combinations as in \cite{Enders_2023_AMODDispatching_HybridMADRL} (\textit{RVC}) and a greedy policy (\textit{Greedy}). The \textit{Greedy} algorithm weighs every request according to its profitability before matching. In the second part, we show the capabilities of our integrated dispatching and rebalancing algorithm by extending $V_{\mathrm{coord.}}$ with rebalancing ($V_{\mathrm{ext.}}$). 
Here, we benchmark against \textit{Greedy}, our algorithm without rebalancing and a rebalancing heuristic (\textit{Heuristic}) that distributes the number of vehicles per zone equally. For a detailed description of all benchmarks, see \Cref{appendix_benchmarks}.

\subsection{Dispatching}\label{sec:dispatching}
\Cref{fig:abl_val} shows the validation results of all four algorithms for dispatching. \textit{RVC} and $V_{\mathrm{coord.}}$ exhibit a stable performance throughout training in both settings. \textit{RVC} converges to the same reward as \textit{Greedy}, while $V_{\mathrm{coord.}}$ converges to a significantly higher reward. The validation results of $V_{\mathrm{local}}$ show stability issues, especially for 38 zones. For the larger setting, $V_{\mathrm{coord.}}$ requires more training steps until convergence as the complexity rises with more vehicles and requests. The validation reward across training episodes of $V_{\mathrm{local}}$ demonstrates that our algorithm exhibits a high variance and a convergence below \textit{Greedy}, when neglecting the coordinated critic loss function. These observations show that the work of \cite{Enders_2023_AMODDispatching_HybridMADRL} improves stability and performance by employing one agent per request-vehicle combination, which improves performance even when using a local loss. While a vanilla implementation of our parallel vehicle-based algorithm is inferior to such an approach, utilizing our algorithm with a coordinated critic loss allows for significant improvements as the combination of a coordinated critic loss and a vehicle-based agent representation better reflects problem dynamics and consequently eases the learning.

\begin{figure}[!tp]
    \begin{subfigure}{0.45\textwidth}
        \includegraphics[height=1.3in]{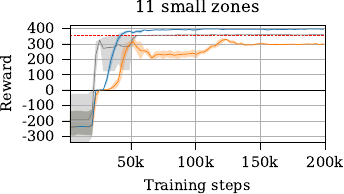}
    \end{subfigure}%
    \hspace{0.5cm}
    \begin{subfigure}{0.45\textwidth}
        \includegraphics[height=1.3in]{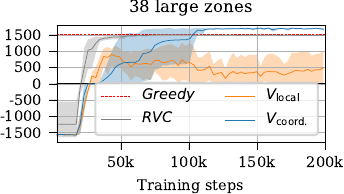}
    \end{subfigure}
    \caption{Average validation rewards of all four algorithms on both NYC settings. The shadowed area is the minimum and maximum value across 5 seeds.}
    \label{fig:abl_val}
\end{figure}

\Cref{table:abl} and \Cref{fig:abl:box} show each algorithm's improvement in test performance compared to \textit{Greedy} corresponding to the validation results in \Cref{fig:abl_val}. $V_{\mathrm{coord.}}$ outperforms \textit{Greedy} significantly, \textit{RVC} exhibits a test result close to \textit{Greedy} and $V_{\mathrm{local}}$ performs worse than \textit{Greedy}. The strong performance of $V_{\mathrm{coord.}}$ highlights the significance of employing a coordinated critic loss. In contrast, the low performance of $V_{\mathrm{local}}$ indicates that the error from using a local loss function substantially hinders the search for a stable and performant policy, which increases with a higher number of vehicles. \textit{RVC} evaluates one request-vehicle combination at a time, effectively splitting the vehicle into multiple agents. Thus, \textit{RVC} lacks the ability to perform anticipatory planning for individual vehicle routes as request-vehicle combinations only exist in the current time step. 
Consequently, \textit{RVC} inadequately accounts for the subsequent state $s’$ and the error resulting from bipartite matching, as it employs the current state for target retrieval, leading to a biased estimate that does not accurately predict the future state, yielding near \textit{Greedy} results.
Our vehicle-based architecture, on the other hand, is able to form a reward trajectory which is necessary for long-term planning. As the algorithm evaluates all requests and receives a reward corresponding to only its actions, it gains a correct estimate of the next state's value and can therefore optimize its subsequent dispatching decisions.

\begin{figure}
    \begin{subfigure}{0.39\textwidth}
        \includegraphics[height=1.05in, trim={0.25cm 0 0 0}, clip]{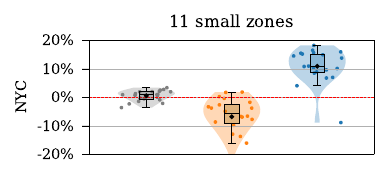}
    \end{subfigure}%
    \hspace{0.60cm}
    \begin{subfigure}{0.50\textwidth}
        \includegraphics[height=1.05in, trim={0.2cm 0 0 0}, clip]{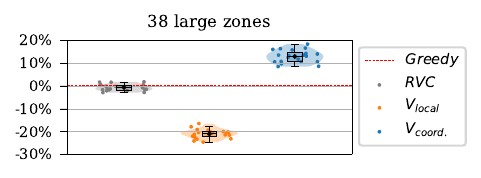}
    \end{subfigure}
    \caption{Test performances for dispatching (0\% indicates equal performance to \textit{Greedy}). Each dot represents the average of one test date across 5 seeds.}
    \label{fig:abl:box}
\end{figure}
\vspace{-20pt}

\begin{table}
\footnotesize
\parbox{.4\linewidth}{
\centering
\begin{tabular}{l|r|r}
\toprule
     Algo.  & 11 small & 38 large \\ \midrule
    \textit{Greedy}       &  350.9            &   1529.5          \\
    \textit{RVC}          &   $+1.9\%$  &  $-1.1\%$ \\
    $V_{\mathrm{local}}$  &  $-4.8\%$                      & $-21.0\%$    \\
    $V_{\mathrm{coord.}}$ &  +\textbf{12.9\%}   &  +\textbf{12.9\%} \\
\bottomrule
\end{tabular}
\vspace{0.25cm}
\caption{Performance improvement to \textit{Greedy} for dispatching across 5 seeds.}
\label{table:abl}
}
\hfill
\parbox{.545\linewidth}{
\centering
\begin{tabular}{l|r|r|r}
\toprule
    Metric             & \textit{Greedy}   & \textit{RVC}  & $V_{\mathrm{coord.}}$  \\ 
    \midrule
    Rejects to empty (\%)      & 18.3            & 18.2       &   44.7 \\
    Rejects to occupied (\%)   & 19.2           & 19.8       &   49.9  \\
    Pick-up distance (zones)    & 2.4               & 2.4           &  0.2  \\
    Waiting time (min)    & 3.8           & 3.7       &   2.5 \\
\bottomrule
\end{tabular}
\vspace{0.25cm}
\caption{Structural comparison of different algorithms on the 11 small zones setting (average across 5 seeds).}
\label{table:policy}
}
\end{table}

Note that, using request-vehicle combinations with global rewards instead of per-agent rewards also fixes the reward allocation problem. In case of global rewards, all agents have a shared goal, fostering coordination to optimize the collective outcome. For this reason, \cite{Heiko} extend \textit{RVC} to utilize global rewards. However, the test for \textit{RVC} with global rewards for our setting performed about the same as \textit{RVC} with per-agent rewards. Thus, indicating the challenge to learn from a single reward signal in combination with the lack of trajectories for the request-vehicle combinations.

To analyze the algorithms’ policies, we compare the metrics presented in \Cref{table:policy} for profitable requests. A request is profitable for the operator, if the fare is higher than the distance cost of any vehicle, and the vehicle can fulfill the request within the maximum waiting time. For each algorithm, we assess the share of rejected requests out of the number of profitable requests to gain insights into the algorithms' behaviors. The metric “rejects to empty” considers empty zones as destinations for the profitable requests and “rejects to occupied” counts zones with more than one vehicle located at that destination. We divide both metrics by the total number of requests to retrieve their ratio. The “pick-up distance” presents the distance driven to the pick-up zone from the current destination of the vehicle. Lastly, the “waiting time” is the average time a customer waits until pick-up. \textit{Greedy} and \textit{RVC} behave similarly, but $V_{\mathrm{coord.}}$ rejects with 44.7\% more than double the amount of profitable requests, keeps waiting times shorter, and reduces pick-up distance by a factor of ten. Hence, $V_{\mathrm{coord.}}$ learns a different policy in two ways. First, it better utilizes implicit rebalancing by preferring requests that allow heading towards empty zones over requests ending in occupied zones. Second, it rejects more low-profit requests to wait for those with higher profits, particularly favoring requests starting in the vehicle's current zone as indicated by the low pick-up distance. This leads to lower waiting times, as future customers rarely have to wait until the vehicle arrives at their location. Such a policy proves effective in instances with supply shortages, where the opportunity cost of rejecting requests is low.

\subsection{Integrated Dispatching and Rebalancing}\label{sec:rebalancing}

\Cref{fig:reb} shows the improvement in test performance compared to \textit{Greedy} for integrated dispatching and rebalancing. $V_{\mathrm{coord.}}$ performs close to \textit{Greedy} in all settings, whereas \textit{Heuristic} and $V_{\mathrm{ext.}}$ perform significantly better than \textit{Greedy}, especially on the Clustered dataset. The low performance improvement of $V_{\mathrm{coord.}}$ compared to dispatching-only bases on the oversupply of vehicles in the tested instances, for which \textit{Greedy} is inherently strong. Note that for the Clustered dataset, \textit{Heuristic} and $V_{\mathrm{ext.}}$ are able to outperform \textit{Greedy} by 96.2\% (51.3\%) and 101.0\% (111.5\%) respectively. Thus, our algorithm performs 39.8\% better than \textit{Heuristic} on the larger setting, but only 1.3\% on the smaller one. The result of \textit{Heuristic} for the smaller setting indicates that the rule-based rebalancing policy matches the distribution and frequency of requests for this instance size well.

\begin{figure}[!bp]
\vspace{-0.5cm}
  \centering
  \includegraphics[width=\textwidth]{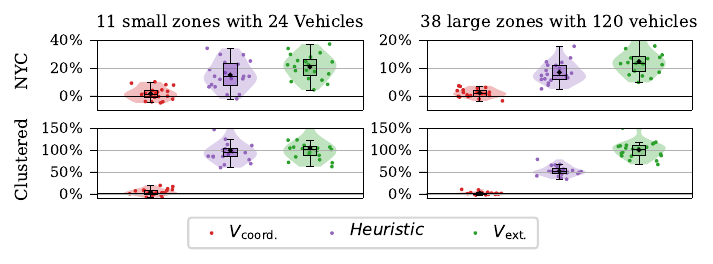}
  \caption{Test performances on both scenarios and datasets for integrated  dispatching and rebalancing compared to \textit{Greedy}. Each dot represents the average of one test date across 5 seeds.}
  \label{fig:reb}
\vspace{-0.5cm}
\end{figure}

\begin{table}
\footnotesize
\centering
\begin{tabular}{l|l|rrr|rrr}
\toprule
    & Zones & \multicolumn{3}{c|}{\textbf{11 small zones}} & \multicolumn{3}{c}{\textbf{38 large zones}} \\ \midrule
    Setting & Algo. \textbackslash \, $K$ & \#12 & \#24 & \#36 & \#60 & \#120 & \#180 \\ 
    \midrule
        \multirow{4}{*}{\textbf{NYC}} & \textit{Greedy} & 297.3 & 418.3 & 479.6 & 1594.4 & 2282.7 & 2674.1 \\
            & $V_{\mathrm{coord.}}$ & $+2.0\%$ & $+1.7\%$ & $+1.6\%$ & $+1.4\%$ & $+1.0\%$& $+0.7\%$ \\ 
            & \textit{Heuristic} & $+6.4\%$ & $+14.2\%$ & $+10.6\%$ & $+5.6\%$ & $+8.3\%$ & $+8.4\%$ \\ 
            & $V_{\mathrm{ext.}}$ & +\textbf{9.9\%} & +\textbf{20.0\%} & +\textbf{21.3\%} &+\textbf{6.4\%}& +\textbf{12.3\%} & +\textbf{10.1\%} \\  \midrule
         \multirow{4}{*}{\textbf{Clustered}} & \textit{Greedy} & 137.2 & 198.8 & 241.8 & 429.3& 688.5 & 882.6 \\
            & $V_{\mathrm{coord.}}$ & +1.2\% & +3.1\% & +0.8\% & +1.4\% & +1.0\% & +0.7\% \\ 
            & \textit{Heuristic} & +74.1\% & +96.2\% & +70.5\% &+66.5\% & +51.3\% & +48.6\% \\ 
            & $V_{\mathrm{ext.}}$ & +\textbf{85.6\%} & +\textbf{101.0\%} & +\textbf{96.7\%} & +\textbf{107.7\%}& +\textbf{111.5\%} & +\textbf{106.4\%}\\ 
\bottomrule
\end{tabular}
\vspace{0.25cm}
\caption{Performance improvement to \textit{Greedy} for dispatching and rebalancing to compare the impact of varying numbers of vehicles $K$ (ceteris paribus) across 5 seeds.}
\label{table:rebalancing}
\vspace{-0.75cm}
\end{table}

Furthermore, we explore the sensitivity of varying number of vehicles on operator returns, see \Cref{table:rebalancing}. Notably, our algorithm $V_{\mathrm{ext.}}$ demonstrates an overall increasing relative performance improvement in comparison to \textit{Heuristic} as the number of vehicles rises. This supports the scalability of our algorithm and indicates that $V_{\mathrm{ext.}}$ effectively capitalizes on past experiences, anticipating future requests and incorporating the spatial distribution, especially for the Clustered dataset. Determining the optimal fleet size is imperative for operators aiming to maximize their profits gained from their rebalancing policy. When the fleet size is small, the degree of freedom to improve is also small. When the fleet size is excessively large, rebalancing becomes less profitable in relative terms as \textit{Greedy} already achieves a comparable performance.

\section{Conclusion}

This work studies the fleet control problem of a profit-maximizing \gls{acr:amod} operator and offers a comprehensive solution, including the implementation code. We solve the dispatching problem by proposing a novel multi-agent \gls{acr:sacd} architecture, in which each agent first evaluates requests in parallel and combines them afterwards for a vehicle-based output. Thus, our algorithm ensures computational efficiency while optimizing its reward trajectory through long-term planning. We show an error in the critic's loss function and demonstrate how to accurately derive a coordinated loss for estimating future state values when combining multi-agent \gls{acr:sac} with a coordination layer, achieved via bipartite matching. By adjusting the critic loss function to a coordinated loss, we obtain a more accurate estimate of the next state's value, fostering the learning process of our agent. In addition, we extend our dispatching algorithm by incorporating concurrent rebalancing capabilities. Experimental results show that our approach surpasses state-of-the-art benchmarks while demonstrating stability across different instances. For dispatching, we outperform the closest benchmark by up to 12.9\% and for integrated dispatching and rebalancing by up to 38.9\%. In future work, we will extend the \gls{acr:amod} fleet control problem by covering charging and investigate how the adaptation to a coordinated critic loss function impacts the performance of \gls{acr:drl} models with combinatorial optimization in other multi-agent problem settings.

\begin{credits}
\end{credits}
%
%

%
\printbibliography
\pagebreak
\appendix



\section{Proof of \Cref{prop:error}}\label{proof:error}
\setcounter{proposition}{0}
\begin{proposition}
Let us denote by $\bar{y}_j$ the per-agent target using $\bar{\pi}_\phi(a'|s',j)$.
Then:
\begin{equation*}
    \bar{y}_j = r_j + \gamma \, \min_{m\in {1,2}}\Bigl\{Q_{\underaccent{\bar}{\theta}}^m(\bar{a}_j|s',j)\Bigl\} 
\end{equation*}
\end{proposition}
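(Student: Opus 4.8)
The plan is to substitute the degenerate distribution $\bar{\pi}_\phi$ directly into the target expression of \Cref{eq:critic_loss} and simplify term by term; no auxiliary machinery is required. First I would write out the modified target, replacing $\pi_\phi$ by $\bar{\pi}_\phi$ and making the transpose-dot (i.e., the expectation over the discrete per-agent action distribution) explicit as a sum over the per-agent action space,
\[
    \bar{y}_j = r_j + \gamma \sum_{a'} \bar{\pi}_\phi(a'|s',j)\Bigl(\min_{m\in\{1,2\}}\bigl\{Q^m_{\underaccent{\bar}{\theta}}(a'|s',j)\bigr\} - \alpha \log \bar{\pi}_\phi(a'|s',j)\Bigr).
\]

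Next I would exploit the degeneracy of $\bar{\pi}_\phi$. Since $\bar{\pi}_\phi(a'|s',j)=1$ exactly when $a'=\bar{a}_j$ and equals $0$ otherwise, every summand indexed by $a'\neq\bar{a}_j$ carries the prefactor $0$ and vanishes, so only the term $a'=\bar{a}_j$ survives. Because the surviving prefactor equals $1$, this collapses the value contribution to $\min_{m\in\{1,2\}}\{Q^m_{\underaccent{\bar}{\theta}}(\bar{a}_j|s',j)\}$.

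The only step requiring care — and the nearest thing to an obstacle — is the entropy term. For the surviving index the contribution is $1\cdot\log 1 = 0$, while for every other index it is $0\cdot\log 0$, which equals $0$ under the standard convention $0\log 0 \coloneqq \lim_{x\to 0^+} x\log x = 0$. Hence the $\alpha\log$ term drops out entirely, independently of the value of $\alpha$. Combining the surviving value term with the vanishing entropy term yields $\bar{y}_j = r_j + \gamma\,\min_{m\in\{1,2\}}\{Q^m_{\underaccent{\bar}{\theta}}(\bar{a}_j|s',j)\}$, which is exactly \Cref{eq:adjusted_critic_target} and completes the argument.
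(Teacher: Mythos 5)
Your proof is correct and follows essentially the same route as the paper's: expanding the scalar product over the action space, using the degeneracy of $\bar{\pi}_\phi$ to kill all terms except $a'=\bar{a}_j$, and disposing of the entropy term via $1\cdot\log 1=0$ and the convention $0\log 0=0$. The only cosmetic difference is that the paper justifies $\lim_{x\to 0^+}x\log x=0$ explicitly via L'H\^opital's rule, whereas you invoke it as the standard convention, which is equally valid.
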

\begin{proof}
First, we express the target $\bar{y}_j$ based on $\bar{\pi}_\phi(a'|s',j)$ using the \gls{acr:sacd} target definition in Equation \eqref{eq:critic_loss} 
\begin{equation}
\bar{y}_j = r_j + \gamma \, \bar{\pi}_\phi(a'|s',j)^T \cdot \Bigl(\min_{m\in {1,2}}\Bigl\{Q_{\underaccent{\bar}{\theta}}^m(a'|s',j)\Bigl\}-\alpha \log (\bar{\pi}_\phi(a'|s',j))\Bigl). \label{eq:pi_adjusted_not_yet_developed}
\end{equation}
Now, we develop \eqref{eq:pi_adjusted_not_yet_developed} and start with the scalar product as the sum over all actions belonging to $\mathcal{A}$
\begin{equation}
\begin{alignedat}{2}
\bar{y}_j & =  r_j &&+ \gamma \, \sum\limits_{a' \in \mathcal{A}}  \biggl[ \bar{\pi}_\phi(a'|s',j) \Bigl(\min_{m\in {1,2}}\Bigl\{Q_{\underaccent{\bar}{\theta}}^m(a'|s',j)\Bigl\}-\alpha \log (\bar{\pi}_\phi(a'|s',j))\Bigl)\biggl] \\
&\stackrel{\text{(I)}}{=} r_j &&+ \gamma \bar{\pi}_\phi(\bar{a}_j|s',j) \Bigl(\min_{m\in {1,2}}\Bigl\{Q_{\underaccent{\bar}{\theta}}^m(\bar{a}_j|s',j)\Bigl\}-\alpha \log (\bar{\pi}_\phi(\bar{a}_j|s',j))\Bigl)\\ 
& &&+ \gamma \sum\limits_{a' \in \mathcal{A} \setminus \bar{a}_j}  \biggl[ \bar{\pi}_\phi(a'|s',j) \Bigl(\min_{m\in {1,2}}\Bigl\{Q_{\underaccent{\bar}{\theta}}^m(a'|s',j)\Bigl\}-\alpha \log (\bar{\pi}_\phi(a'|s',j))\Bigl)\biggl]\\
&\stackrel{\text{(II)}}{=} r_j &&+ \gamma \underbrace{\bar{\pi}_\phi(\bar{a}_j|s',j)}_{=\;1} \Bigl(\min_{m\in {1,2}}\Bigl\{Q_{\underaccent{\bar}{\theta}}^m(\bar{a}_j|s',j)\Bigl\}-\alpha \underbrace{\log (\bar{\pi}_\phi(\bar{a}_j|s',j))}_{=\;0}\Bigl)\\
& &&+ \gamma \sum\limits_{a' \in \mathcal{A} \setminus \bar{a}_j}  \biggl[ \underbrace{\bar{\pi}_\phi(a'|s',j) \min_{m\in {1,2}}\Bigl\{Q_{\underaccent{\bar}{\theta}}^m(a'|s',j)\Bigl\}}_{=\;0}-\bar{\pi}_\phi(a'|s',j)\alpha \log (\bar{\pi}_\phi(a'|s',j))\biggl]\\
&\stackrel{\text{(III)}}{=} r_j &&+ \gamma\, \min_{m\in {1,2}}\Bigl\{Q_{\underaccent{\bar}{\theta}}^m(\bar{a}_j|s',j)\Bigl\} - \gamma \sum\limits_{a' \in \mathcal{A} \setminus \bar{a}_j}  \biggl[\underbrace{\bar{\pi}_\phi(a'|s',j) \alpha \log (\bar{\pi}_\phi(a'|s',j))}_{=\;0}\biggl]\\
&= r_j &&+ \gamma\, \min_{m\in {1,2}}\Bigl\{Q_{\underaccent{\bar}{\theta}}^m(\bar{a}_j|s',j)\Bigl\}.
\end{alignedat}
\end{equation}
In the above derivation, we used the following reasoning:

(I) We divide the sum into terms depending on the action $\bar{a}_j$ and terms depending on the set of actions $\mathcal{A} \setminus \bar{a}_j$. Here, $\bar{a}_j$ is the agent-specific action derived from the coordinated action $\bar{a}$. 

(II) We recall that $\bar{\pi}_\phi(a'|s',j) = \begin{cases}
    1, \text{ if } a' = \bar{a}_j\\
    0, \text{ if } a' \neq \bar{a}_j\\
\end{cases}$, as the distribution $\bar{\pi}_\phi(a'|s',j)$ is degenerate. 

(III) We use the rule of L'Hôpital to compute $\bar{\pi}_\phi(a'|s',j)^T\alpha \log (\bar{\pi}_\phi(a'|s',j))$ for $\bar{\pi}_\phi(a'|s',j)=0$, if $a'\neq \bar{a}_j$. Let us denote $x=\bar{\pi}_\phi(a'|s',j)$, then the following holds
\begin{equation}
    \lim_{x\rightarrow 0}\;x\; \log(x) = \lim_{x\rightarrow 0}\; \frac{\log x}{\frac{1}{x}} \stackrel{\mathrm{\text{L'Hôpital}}}{=} \lim_{x\rightarrow 0}\; \frac{\frac{1}{x}}{-\frac{1}{x^2}} = \lim_{x\rightarrow 0}\; \frac{-x^2}{x} = \lim_{x\rightarrow 0}\; -x = 0.
\end{equation}
Hence, we set $\bar{\pi}_\phi(a'|s',j)^T \alpha \log (\bar{\pi}_\phi(a'|s',j)) = 0$.
\end{proof}

\subsection{Dependency on the Environment}\label{app:env_discussion}
We note here, that the usage of the unadjusted loss $y_j$ can lead to good results depending on the environment. Using the unadjusted loss in unconstrained environments, where the agent can execute its action directly, e.g., cooperative environments that determine only the reward depending on the action of others, can yield good results \cite{Iqbal_2019}. However, in our problem setting the matching algorithm introduces a competitive component leading to a constrained environment. In competitive environments, where the executed actions depend on the actions of others, using the per-agent action leads to an increasingly inaccurate $y_j$ with a higher number of agents. The increasing mismatch between per-agent actions and the actually executed actions, derived from the coordinated action, is shown by \cite{Lowe_2017} where they propose their multi-agent deep deterministic policy gradient algorithm. 

\section{Methodology} \label{sec:appendix_methodology}
In the following, we describe methodological details. First, we present our chosen features and architecture details of the employed neural networks. Second, we outline the hyperparameters used in our experiments. Third, we explain the masking, which we use to retrieve the weights for our bipartite matching. Fourth, we describe details regarding the definition of rebalancing requests and the corresponding reward mechanisms applied during the training process.

\subsection{Feature \& Neural Network Architectures} \label{sec:appendix_features}
In this section, we describe the features used for our new parallel architecture in detail. The input state for the first parallel dense layer is specific to the respective agent $j$ and considers all available requests $i \in \left\{1, \dots, F_t\right\}$. As we fix the number of requests per agent, we extend $F_t$ to $F_{max}$ through padding to have a static input size.
We use a spatial discretization to divide the operating area into a hexagonal grid. Each grid zone represents a central point with horizontal and vertical indices. We map all possible locations of the operating area to zones. The encoding of a location is based on the normalized indices of the respective zone.

We now list the specific environment information that we use as inputs to generate features for the neural network. The inputs differ slightly depending on whether rebalancing is enabled ($V_{\mathrm{ext.}}$) or whether only the critic evaluates the feature (Critic):
\begin{itemize}
    \item \textbf{Miscellaneous features} $\textbf{m}_t$ contain general environment information and are the same for all agents: 
    \begin{itemize}
        \item Time step in the episode, normalized to [0,1]
        \item Aggregated time steps needed for all vehicles to reach their final destination positions, normalized to [0, 1], which indicates the current level of activity within the fleet
        \item Count of requests placed since start of current episode, divided by the count of requests placed on average until the current time step, which indicates the observed demand compared to an average episode
    \end{itemize}
    \item \textbf{Vehicle features} $\textbf{v}_{j,t}$ are agent-specific:
    \begin{itemize}
        \item Normalized encoding of vehicle's end position, which is the current vertex if no request is assigned, otherwise destination of assigned request that will be served last
        \item Time steps to reach this position, divided by the maximum time between any two vertices in the graph representing the operating area
        \item Number of assigned requests, normalized to [0,1]
    \end{itemize}
    \item \textbf{Request features} $\textbf{f}_{i,t}$ are duplicated for all agents: 
    \begin{itemize}
        \item Normalized encoding of the request origin zone
        \item Normalized encoding of the request destination zone
        \item Distance from origin to destination on graph G, normalized by maximum distance to [0, 1]
        \item ($V_{\mathrm{ext.}}$) Rebalancing flag in $\{0, 1\}$
        \item ($V_{\mathrm{ext.}}$) Number of vehicles with final destination in pick-up zone, normalized to [0,1]
        \item ($V_{\mathrm{ext.}}$) Number of vehicles with final destination in drop-off zone, normalized to [0,1]
    \end{itemize}
    \item \textbf{Request-Vehicle features} $(\textbf{f}_{i,t},\textbf{v}_{j,t})$ are specified per request $i$ and vehicle $j$:
    \begin{itemize}
        \item Distance from vehicle position to request origin, normalized by maximum distance to [0, 1]
        \item (Critic) Waiting time flag in $\{0, 1\}$, which indicates whether a request can be picked up in time
    \end{itemize}
\end{itemize} 

Next, we describe the actor network. The general structure of the input processing resembles prior work by \cite{Enders_2023_AMODDispatching_HybridMADRL} using the attention mechanism and embeddings presented in \cite{Holler_2019_MODRebalancing_MADRL} and \cite{kullmann_attention}.
The request and vehicle input features serve as inputs to create the respective embeddings. 
Each embedding consists of a feedforward dense layer with 32 units and a \gls{acr:relu} activation.
The attention mechanism computes a context from each embedding. The global context is the concatenation of the request context and the vehicle context. As the number of requests are time-dependent, we use embeddings and contexts to represent variable-size inputs in a fixed-size global representation.
For each path of the parallel network (see \Cref{fig:architecture}), we combine the request and vehicle embeddings, the global context, and the specific request-vehicle input features. The parallel networks act as trainable multi-layer embeddings. We experience better performance when we shuffle the parallel inputs for all customer requests in order to train all subsequent nodes after the flattening equally.
We evaluate these inputs for five parallel dense layers with unit sizes of 512, 256, 128, 64, 32 and after flattening for six layers with unit sizes of 1024, 512, 256, 128, 64, 32. We apply L2 regularization with a coefficient of $10^{-4}$ to all layers. Generally, we use $\text{ReLU}$ activation for the feedforward layers and evaluate the final output on a softmax activation with $|F_{max}+1|$ units. 

The critic network architecture is similar to the actor network's architecture. However, we add the information regarding which requests were accepted after the coordination to the input features for the embeddings and the context calculation. To this end, we add a binary flag ${0,1}$ that denotes rejection and acceptance for requests. For the vehicle input features, we integrate the information of the origin and the destination of newly assigned requests. The critic output does not have an activation function.

\subsection{Hyperparameters}\label{sec:appendix_hyperparameters}
We mostly use the same hyperparameters as \cite{Enders_2023_AMODDispatching_HybridMADRL} for our algorithm. We train for 200,000 steps (300,000 for rebalancing on 38 large zones), update the network parameters every 20 steps, and test the performance of the current policy on the validation data every 2,880 steps (48 episodes). During the first 20,000 steps, we collect experience with a random policy and do not update the network parameters. For the next 30,000 steps, we add linearly declining noise to the actor's weights. For the critic loss, we use the Huber loss with a delta value of 10 instead of the squared error. We use gradient clipping with a clipping ratio of 10 for actor and critic gradients. We use the Adam optimizer with a learning rate of $3\times10^{-4}$. We sample batches of size 128 from a replay buffer with maximum size of 100,000 for the first experiment and 50,000 for the second experiment. We set the discount factor to 0.925. For the update of the target
critic parameters we use an exponential moving average with smoothing factor $5\times10^{-4}$. We tune the entropy coefficient individually per instance in the range of 0.2 to 0.6.

We conduct multiple training runs using five different random seeds and select the model with the highest validation performance across these runs. The selected model undergoes testing on the dataset, and the results presented in this paper reflect its performance on the test data.

\subsection{Edge weight masking}\label{app:post}
In Figure \ref{fig:architecture}, we use deterministic post-processing to mask the edge weights used in matching to adhere to our constraints and to determine the per-agent reject action. The masking step enables the agent to reject single requests by comparing the weights against a constant threshold. The selected threshold is the highest value at which the agent is able to accept all requests and also rank them against one another. Thus, a per-request weight must be greater than $\delta = 1 / (F_{max} + 1)$. By ranking we refer to the agent's ability to signal its wish to take multiple requests, but valuing them differently. During training, the agent differentiates between an active and a passive reject action. The active reject action occurs if the agent does not want any request. In this case, we will consider the $Q$-value and probability of the reject action when calculating the loss. For the passive reject action, the system does not use the $Q$-value and probability when calculating the loss, as the matching decided that the agent does not receive a request instead of the agent choosing to reject all its requests. Thus, the passive reject action is not the same as the active reject action and therefore the $Q$-value for actively rejecting cannot be used.

In \Cref{alg:post} we describe how we mask the weights obtained from the actor (cf. \Cref{fig:model}). We start with the plain weights from the actor as input data and change them to match the environment constraints and to retrieve the reject action. We first check whether the second place of the request buffer $\mathbf{f}^2_{j,t}$ is occupied (\Cref{line:post1}). If this is the case, we set all weights of the vehicle $w_{i,t}$ to 0 and a passive reject action $a_{0,t} = 0$ (\Cref{alg:post2}). If not, we check for each weight of the vehicle $w_{i,t}$ if is it higher than the previously mentioned threshold which we denote as $\delta$ (\Cref{alg:post5}). If this is the case, we keep the weight as-is (\Cref{alg:post6}) if this is not the case, we set it to 0 (\Cref{alg:post8}). Finally, we check for an active reject action by adding up all weights of one vehicle and check whether the sum corresponds to zero (\Cref{alg:post9}). If this is the case, we determine an active reject action $a_{0,t} = 1$ (\Cref{alg:post10}) and if not a passive reject action $a_{0,t} = 0$ (\Cref{alg:post12}). 

\begin{algorithm}[!htbp]
\SetKwInput{KwData}{Input Data}
\SetKwInput{KwResult}{Output Data}
 \raggedright
 \KwData{weights $w_{i,t}$ ; threshold $\delta = \frac{1}{F_{max}+1}$}
 \KwResult{weights $w_{i,t}$ ; reject action $a_{0,t}$}
 \uIf{$\mathbf{f}^2_{j,t} \neq \emptyset$\label{line:post1}}{
    $w_{i,t}$ = 0 ; $a_{0,t}$ = 0 \;
  \label{alg:post2}}
 \Else{
    \For{$i$ in $F_t$}{
    \uIf{$w_{i,t} > \delta$\label{alg:post5}}{$w_{i,t} = w_{i,t}$\label{alg:post6}}
    \Else{$w_{i,t}$ = 0\label{alg:post8}}}
    \uIf{ $ \sum_i w_{i,t} = 0$\label{alg:post9}}{$a_{0,t}$ = 1}\label{alg:post10}
    \Else{$a_{0,t}$ = 0}\label{alg:post12}}
 \caption{Masking per vehicle agent}\label{alg:post}
\end{algorithm}

\subsection{Rebalancing Requests}\label{app:RebalancingRequests}

To seamlessly integrate rebalancing with dispatching, we model rebalancing actions as requests. We constrain rebalancing requests and differentiate them from customer requests as follows:
\begin{enumerate}[label=(\alph*)]
    \item Vehicle $j$ can only accept rebalancing requests when it idles, i.e., when its request buffer is empty ($\mathbf{f}^1_{j,t} = \emptyset \land \mathbf{f}^2_{j,t} = \emptyset$). Therefore, only one rebalancing action can be undertaken at a time.
    \item The origin of rebalancing requests corresponds to their destination. Thus, a vehicle's approach to that destination is considered as rebalancing.
    \item Rebalancing request $i$ is only possible to a zone not corresponding to the current zone of vehicle $j$, i.e, $p_{j,t} \neq d_{i,t}$.
    \item Rebalancing requests must be accepted immediately and cannot be deferred, i.e., $\omega_{i,t} = \emptyset$.
    \item Rebalancing requests generate operational costs corresponding to the distance traveled by the vehicle that fulfills it.
    \item We incorporate a flag into the rebalancing request features to distinguish them from customer requests.
\end{enumerate}

As rebalancing requests incur only costs, every rebalancing action returns a negative reward when executed. The algorithm learns to identify the positive impact of rebalancing in future time steps via the Bellman recursion. However, we can speed up learning by incorporating additional reward signals during training. We do so by calculating an artificial reward signal for rebalancing, only used during training, that is added to the operational costs of rebalancing in order to incentivize good rebalancing behavior and to penalize bad behavior.

\Cref{alg:rebalancing_rewards} describes the calculation of the training reward signal for rebalancing requests. The variable $x_t^d(\mathbf{f}_i)$ in \Cref{alg2:destination} measures the number of vehicles in the destination zone $d$ of the rebalancing request $i$ at time $t$ normalized to a value between 0 (full zone) and 1 (empty zone). The variable $x_t^o(\mathbf{f}_i)$ calculates a similar proxy measure for the origin zone $o$ of the request. We distinguish between empty and full origin zones (\Cref{alg2:if}), as empty zones result in a negative reward signal (\Cref{alg2:origin1}) and full zones in a positive reward signal (\Cref{alg2:origin2}). We give the positive reward signal in \Cref{alg2:origin2} proportional to the emptiness of the destination as we multiply by $x_t^d(\mathbf{f}_i)$, and we increase the negative reward signal faster (indicated by factoring it with -2). Finally, we sum up $x_t^d(\mathbf{f}_i)$ and $x_t^o(\mathbf{f}_i)$, wherein we attribute a lower weight to $x_t^o(\mathbf{f}_i)$ (\Cref{alg2:final_reward}) to mitigate unnecessary rebalancing. Thus, we have a reward signal that accounts for the occupancy of both the destination zone and the origin zone, to give the complex state representation additional reward signals, which rebalancing would otherwise not receive. 

\begin{algorithm}[H]
    \setcounter{AlgoLine}{0} 
    \raggedright
    \SetKwInput{KwData}{Input Data}
\SetKwInput{KwResult}{Output Data}
    \KwData{number of vehicles at origin of the request $o_t(\mathbf{f}_i)$ ; number of vehicles at destination of the request $d_t(\mathbf{f}_i)$, cost parameter $c$, minimum distance between two vertices $l$, average number of vehicles per zone rounded up $\lceil v_t \rceil$, average number of vehicles per zone rounded down $\lfloor v_t \rfloor$}
    \KwResult{training signal $s_{j,t}$ for vehicle $j$}
    $x_t^d(\mathbf{f}_i) = 1 - \min\{1, \frac{d_t(\mathbf{f}_i)}{\lfloor v_t \rfloor}\}$\label{alg2:destination}\\
    \uIf{$o_t(\mathbf{f}_i)<v_u$\label{alg2:if}}{$x_t^o(\mathbf{f}_i) = \frac{(\lceil v_t \rceil - o_t(\mathbf{f}_i)+1)\times(-2)}{\lceil v_t \rceil}$\label{alg2:origin1}}
    \Else{$x_t^o(\mathbf{f}_i) = \frac{o_t(\mathbf{f}_i) - \lceil v_t \rceil}{\lceil v_t \rceil}\times x_t^d(\mathbf{f}_i)$\label{alg2:origin2}}
    $x_t(\mathbf{f}_i)= \frac{x_t^o(\mathbf{f}_i)}{2}+2\times x_t^d(\mathbf{f}_i)$\\
    $f_{j,t} = c \times l \times x_t(\mathbf{f}_i)$ \label{alg2:final_reward}
    \caption{Calculation of training signal for rebalancing during training}\label{alg:rebalancing_rewards}
\end{algorithm}

\section{Experiments}\label{sec:appendix_experiment}
In the following, we present additional details about our experiments. First, we provide information regarding the underlying datasets and the system configuration. Second, we elaborate on the three benchmark algorithms. Third, we offer insights into the rebalancing behavior exhibited by our policy.

\subsection{Datasets} \label{sec:appendix_datasets}

For our experiment, we use yellow taxi trip records in NYC from the year 2015 \cite{nycdata}, excluding weekends and holidays. We assume that request placement times coincide with the reported pick-up times in the dataset. We extract pick-up and drop-off longitude/latitude coordinates, restricting our experiment to trips where both coordinates fall on the main island of Manhattan. We use spatial discretization by employing hexagonal grids. We map each request to a pick-up and a drop-off zone based on the shortest distance of the requests pick-up and drop-off coordinates to the center of any zone. We exclude trips starting and ending in the same zone. The distances between neighboring zones are set at 459 meters for small zones and 917 meters for large zones. Travel time between two neighboring zones assumes two and five time steps based on realistic driving speeds. We construct a graph, wherein each zone represents a vertex and edges connect neighboring zones. Vehicles traveling between non-neighboring zones follow the shortest route.

We focus on two different subsets of zones within the NYC dataset, which define our simulated operating areas, i.e., the part of Manhattan in which our fleet operates. Hence, we only consider requests that have pick-up and drop-off locations within the operating area. For the 38 large zones, we downscale the trip data by a factor of 20, using every 20th request for simulation. This adjustment accommodates hardware limitations and results in an average of 360 requests per episode for the 11 small zones and 828 requests for the 38 large zones. The mean trip distance is larger for the 38 large zones, influencing the number of vehicles required.

\Cref{fig:data_11n} illustrates the smaller operating area of the NYC dataset, where colors represent the pick-up and drop-off frequency. Dark green represents zones with a majority of pick-ups, while dark red represents zones with a majority of drop-offs. Light yellow / green reflect zones where the number of pick-ups and drop-offs is almost equal. Although some of the 11 small zones exhibit a minor bias toward either drop-off or pick-up, it is not strongly pronounced. Therefore, we obtain a second dataset which maintains the temporal patterns of the NYC taxi dataset while altering the drop-off and pick-up locations to achieve more imbalanced request distributions. We do so by sampling pick-ups from a normal distribution around a cluster and sampling drop-offs towards its edges from another normal distribution. This modification allows for a more imbalanced spatial distribution of requests, depicted by dark green and dark red zones in \Cref{fig:data_11s}. This simulates realistic scenarios like, e.g., the end of a major sports event in the middle of the operating area. \Cref{fig:data_38} displays the operating area of the 38 large zones with its spatial distribution.

We assume a maximum waiting time of five minutes for the 11 small zones setting and ten minutes for the 38 large zones setting, with the increased waiting time due to the longer trip lengths and greater size of the operating area. To achieve a 20\% operating profit margin with empty driving to the pick-up location, we set the revenue at 5.00 USD per km, and operational costs at 2.00 USD per km.
\begin{figure*}[!tp]
    \centering
    \begin{subfigure}{.28\textwidth}
        \centering
        \includegraphics[width=\linewidth]{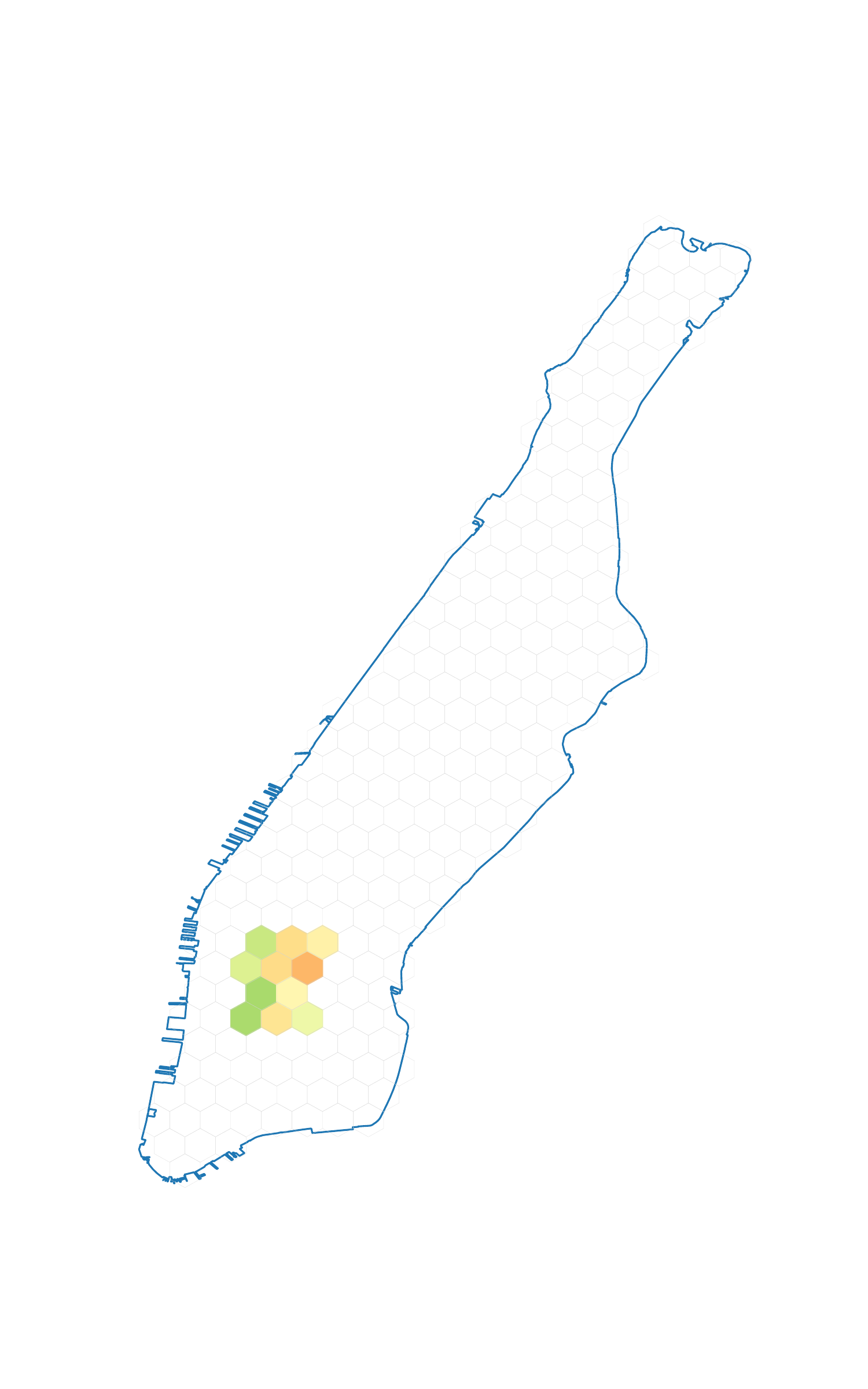}{}
        \caption{NYC - 11 zones}
        \label{fig:data_11n}
    \end{subfigure}
    \begin{subfigure}{.28\textwidth}
        \centering
        \includegraphics[width=\linewidth]{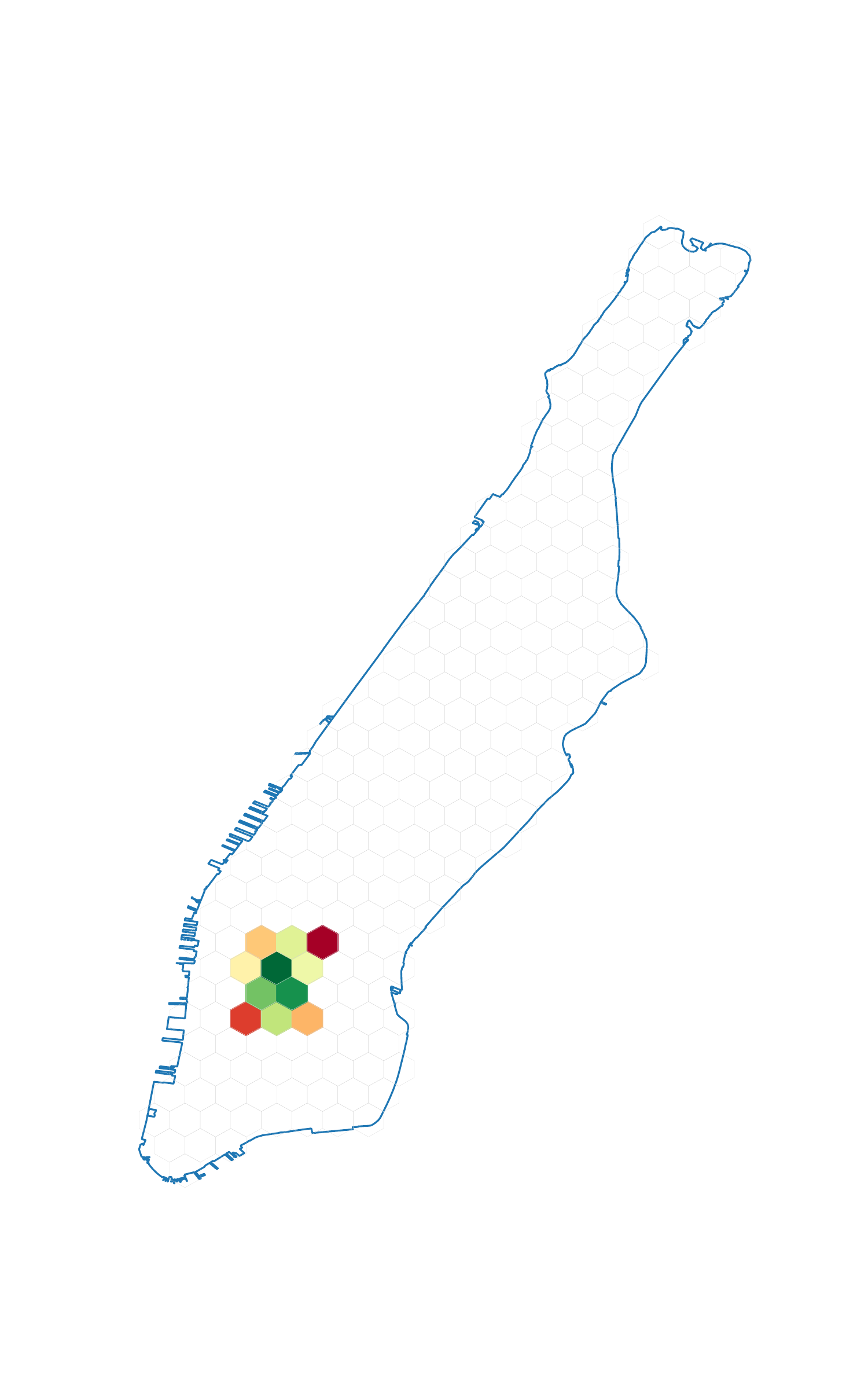}
        \caption{Clustered - 11 zones}
        \label{fig:data_11s}
    \end{subfigure}
\caption{The 11 zone operating area. Zones dominated by pick-ups are marked in green, zones dominated by drop-offs are marked in red, and zones with a balance between pick-ups and drop-offs are marked in light green/yellow.}
\label{fig:data}
\end{figure*}
\begin{figure*}[!htbp]
    \centering
    \begin{subfigure}{.28\textwidth}
        \centering
        \includegraphics[width=\linewidth,trim={0 2.5cm 0 2.5cm},clip]{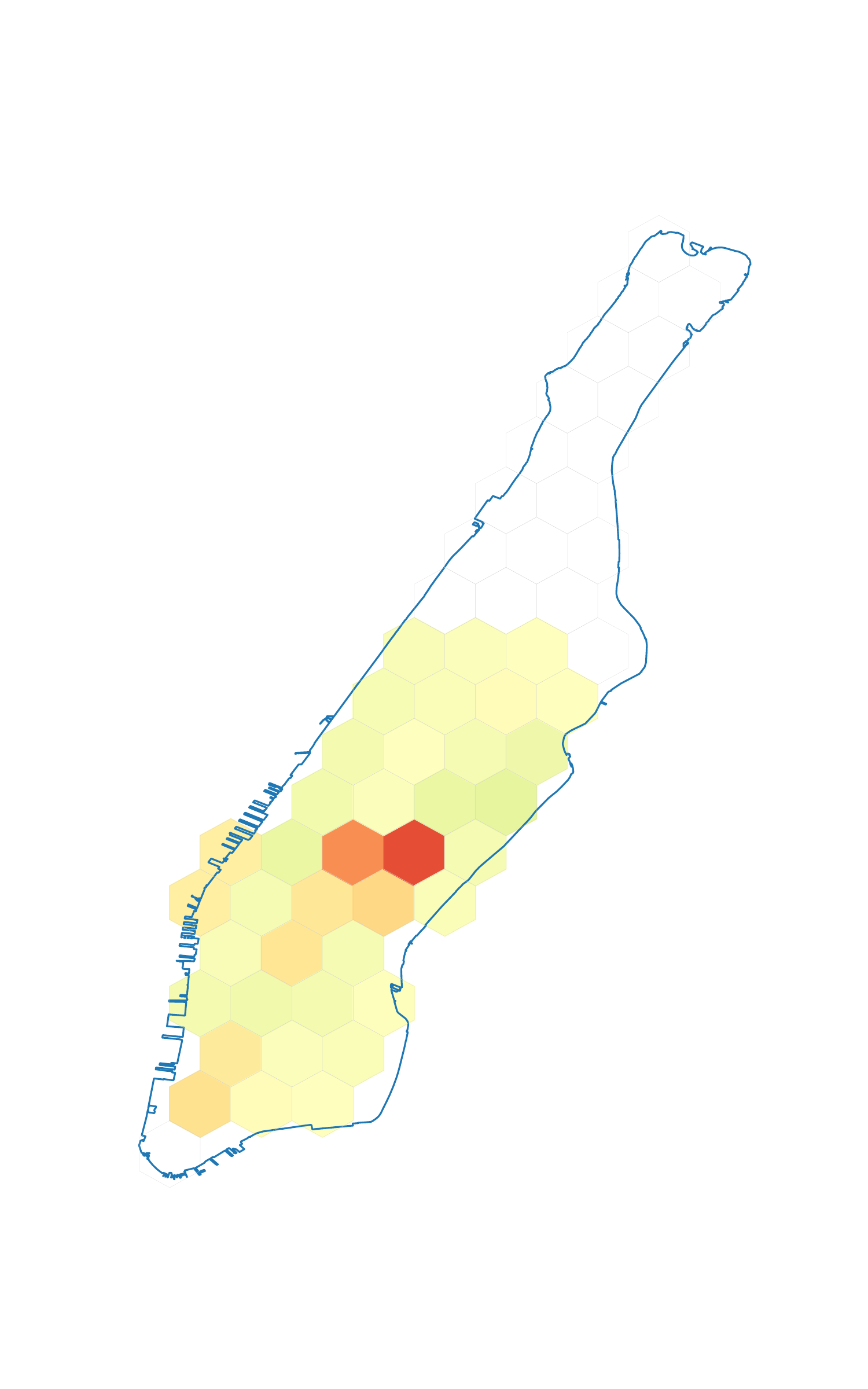}{}
        \caption{NYC - 38 zones}
        \label{fig:data_38n}
    \end{subfigure}
    \begin{subfigure}{.28\textwidth}
        \centering
        \includegraphics[width=\linewidth,trim={0 2.5cm 0 2.5cm},clip]{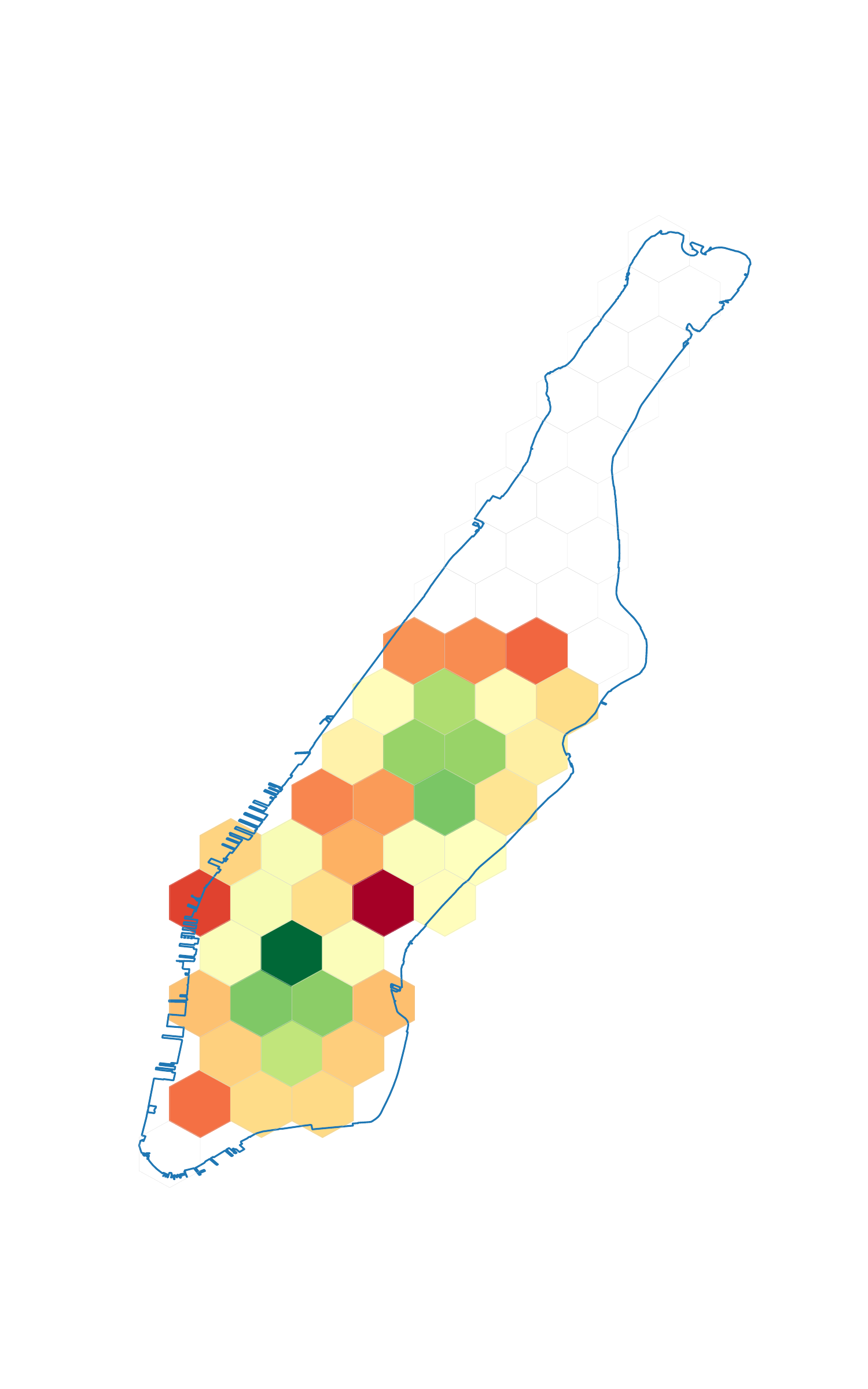}
        \caption{Clustered - 38 zones}
        \label{fig:data_38s}
    \end{subfigure}
\caption{The 38 zone operating area. Zones dominated by pick-ups are marked in green, zones dominated by drop-offs are marked in red, and zones with a balance between pick-ups and drop-offs are marked in light green/yellow.}
\label{fig:data_38}
\end{figure*}

\subsection{Benchmarks}\label{appendix_benchmarks}

In the following, we describe the used benchmarks for all experiments:

\textbf{\textit{Greedy}:} It chooses the highest reward at the current time step. Thus, it assigns every request a weight that depends on the profitability of the request. The weight is zero for unprofitable requests, i.e., if the cost is higher than the fare or if the time to pick-up the customer is higher than the maximum waiting time. The algorithm weighs all other requests proportionally to the profitability for each vehicle. To this end, \textit{Greedy} subtracts the cost of the total distance, i.e., the distance to the pick-up plus the distance driven with the customer, from the fare of the request. 

\textbf{\textit{Rebalancing Heuristic}:} It weighs customer requests the same way as the greedy policy, but has additional rebalancing capabilities. \textit{Heuristic} weighs rebalancing actions lower than customer requests and aims for an equal distribution of vehicles per zone. \textit{Heuristic} attributes a non-zero weight to rebalancing requests only if the vehicle's current zone has more vehicles than the rounded up average number of vehicles per zone $\lceil v_t \rceil$ and vehicle's destination zone has less vehicles than the rounded down $\lceil v_t \rceil$ average number of vehicles per zone. The weight is proportional to the vehicle's distance to the destination zone.

\textbf{\textit{RVC}:} Each agent represents a request-vehicle pair. All agents return two probabilities, one for declining the request and one for accepting it. Both probabilities add up to one. The accept probabilities serve as weights in the bipartite matching problem if they are higher than the reject probabilities during validation and testing. If the accept probabilities are lower than the reject probabilities, they are not considered in the bipartite matching problem. During training, accept and reject decisions are sampled according to their respective probabilities and the weights in the bipartite matching problem then correspond to the probability of the sampled action.

\subsection{Rebalancing Behavior} \label{sec:appendix_rebalancing}

Our experiments show for the 11 small zones setting that $V_{\mathrm{ext.}}$ always rebalances vehicles to a neighboring zone in one time step. This behavior is beneficial as the agent can reevaluate its rebalancing actions in each time step and check for new customer requests. The larger the setting, the more rebalancing actions the agent has to evaluate. This increases noise and computational effort. As the agent exclusively rebalances to neighboring zones in the context of the 11 small zones setting, we intentionally extend this practice to encompass only neighboring zones as viable rebalancing options for the larger 38 zones.

\end{document}